\newcommand{\R}[0]{\mathbb{R}}
\newcolumntype{P}[1]{>{\centering\arraybackslash}p{#1}}
\newcommand{\satheu}{H1}
\newcommand{\coreheu}{H2}
\newcommand{\polyqent}{\textsc{PolyQEnt}}
\definecolor{codegreen}{rgb}{0,0.6,0}
\definecolor{codegray}{rgb}{0.5,0.5,0.5}
\definecolor{codepurple}{rgb}{0.58,0,0.82}
\lstdefinestyle{mystyle}
{
	language = python,
	basicstyle = {\ttfamily\footnotesize},
	stringstyle = {\color{string-color}},
	keywordstyle = {\bfseries\color{codegreen}},
	keywordstyle = [2]{\bfseries\color{codegreen}},
	keywordstyle = [3]{\bfseries\color{codepurple}},
	keywordstyle = [4]{\bfseries\color{teal}},
	otherkeywords = {;,<<,>>,++},
	morekeywords = [2]{do, then, fi, done},
	morekeywords = [3]{true, false, skip},
	morekeywords = [4]{Invariant},
}
\newcommand{\blueit}[1]{\textcolor{blue}{#1}}
\newcommand{\monoid}{\text{\(\mathcal{M}onoid\)}}
\renewcommand{\paragraph}[1]{\smallskip\noindent{\bf #1}}
\begin{document}
\title{\polyqent: A Polynomial Quantified Entailment Solver}
%
%
\author{Krishnendu Chatterjee\inst{1} \and
Amir Kafshdar Goharshady\inst{2} \and
Ehsan Kafshdar Goharshady\inst{1} \and Mehrdad Karrabi \inst{1} \and Milad Saadat \inst{3} \and Maximilian Seeliger \inst{4} \and {\DJ}or{\dj}e \v{Z}ikeli\'{c} \inst{5}}
\authorrunning{K.~Chatterjee et al.}
%
\institute{Institute of Science and Technology Austria, Klosterneuburg, Austria 
\and
University of Oxford, Oxford, UK\\
\and 
Sharif University of Technology, Tehran, Iran\\
\and
Vienna University of Technology\\
\and
Singapore Management University, Singapore, Singapore\\ 
}
\maketitle              
\begin{abstract}
Polynomial quantified entailments with existentially and universally quantified 
variables arise in many problems of verification and program analysis. 
We present \polyqent\ which is a tool for solving polynomial quantified entailments in which variables on both sides of the implication are real valued or unbounded integers.
Our tool provides a unified framework for polynomial quantified entailment problems that arise in several papers in the literature. 
Our experimental evaluation over a wide range of benchmarks shows the 
applicability of the tool as well as its benefits as opposed to simply using existing SMT solvers to solve such constraints.

\end{abstract}

\section{Introduction}\label{sec:intro}

\paragraph{Polynomial constraint solving in verification.} A fundamental computational task that arises in several contexts of 
verification and static analysis of programs  is constraint solving over 
polynomials. The most prominent example of an application in program analysis is {\em template-based synthesis}~\cite{GulwaniSV08}. Given a program and a property, a classical approach to proving that the program satisfies the property is to compute a certificate (i.e.~a formal proof) of the property~\cite{floyd1993assigning}. This can be achieved by fixing a suitable {\em symbolic template} for the certificate, which allows reducing the program verification problem to computing values of symbolic template variables that together give rise to a correct certificate~\cite{SrivastavaGF13}. Such an approach with symbolic templates being linear or polynomial functions has found extensive applications in static analysis of programs with linear or polynomial arithmetic, including termination analysis~\cite{ColonS01,PodelskiR04,Chatterjee16}, invariant generation~\cite{Colon03,FengZJZX17,Chatterjee20}, reachability~\cite{AsadiC0GM21}, cost analysis~\cite{0002AH12,GulwaniZ10,Zikelic22}, program synthesis~\cite{GulwaniJTV11,Goharshady23} and probabilistic program analysis~\cite{ChakarovS13,Chatterjee16,ChatterjeeNZ17}. 
This approach has also found extensive applications in other domains of computer science, e.g.~controller verification and synthesis~\cite{AhmadiM16,prajna2002introducing,PrajnaJ04}.

\paragraph{Polynomial Quantified Entailments.} In all cases mentioned above, the goal of template-based synthesis is to compute a certificate for the property of interest, where the certificate is computed in the form of a symbolic linear or polynomial function. The computation is achieved by a reduction to solving a system of {\em polynomial entailments}, i.e.~a system of $K \in \mathbb{N}$ constraints of the form
\begin{equation*}
    \exists t \in \mathbb{R}^m.\,\, \bigwedge_{i=1}^K \Big( \forall x \in \mathbb{R}^n.\,\, \Phi^i(x,t) \Longrightarrow \Psi^i(x,t)\Big).
\end{equation*}
Here, the variables $t \in \mathbb{R}^m$ present real-valued {\em template coefficients} of the symbolic linear or polynomial function that together define the certificate, and each $\Phi^i$ and $\Psi^i$ is a {\em boolean combination of polynomial inequalities} over a vector $x \in \mathbb{R}^n$ of program variables. The entailments $\forall x \in \mathbb{R}^n.\,\, \Phi^i(x,t) \Longrightarrow \Psi^i(x,t)$ together encode the necessary properties for the symbolic template polynomial to define a correct certificate. Hence, any valuation of the variables $t \in \mathbb{R}^m$ that gives rise to a solution to the system of constraints above also gives rise to a concrete instance of the correct certificate. We refer to each entailment $\forall x \in \mathbb{R}^n.\,\, \Phi^i(x,t) \Longrightarrow \Psi^i(x,t)$ as a {\em polynomial quantified entailment (PQE)}, and to the problem of solving a system of PQEs in eq.~\eqref{eq:pqe} as {\em PQE solving}.


\paragraph{Solving PQEs via positivity theorems.} Initial work on template-based synthesis has focused on linear programs and linear certificate templates. A classical approach to solving this problem is to use Farkas' lemma that considers implications over linear expressions~\cite{farkas02}, which has been applied in several works related to program analysis, e.g.~\cite{Colon03,Heizmann14,ChakarovS13,Chatterjee18}. However, this method was insufficient for analyzing programs described by polynomials, e.g.~programs that may contain program variable multiplication. A generalization of Farkas' lemma-style reasoning to the setting of polynomial constraints and PQEs is achieved by using {\em positivity theorems}, such as Handelman's~\cite{handelman88} and Putinar's theorem~\cite{putinar93}. It was shown in~\cite{Chatterjee16,AsadiC0GM21} that they can be applied towards effectively solving systems of PQEs, with applications in static analysis of polynomial programs for termination~\cite{Chatterjee16}, reachability~\cite{AsadiC0GM21}, invariant generation~\cite{Chatterjee20}, non-termination~\cite{Chatterjee21} properties and for probabilistic program analysis~\cite{ChatterjeeNZ17,Wang19}.

\paragraph{\polyqent.} PQE solving via positivity theorems is becoming increasingly popular in static program analysis, however tool support for its integration into these analyses is non-existent and researchers have relied on their own implementations. In this work, we present our tool \polyqent\ which implements methods for solving systems of PQEs over the theories of polynomial real or unbounded integer arithmetic, based on Handelman's theorem, Putinar's theorem and Farkas' lemma. We provide efficient implementations of each of these methods together with practical heuristics that we observed to improve their performance. At the same time, our tool preserves soundness and relative completeness guarantees of these translations as established in the previous results in the literature~\cite{AsadiC0GM21,Chatterjee20}. 
We envision that \polyqent\ will allow future research to focus on the design of appropriate certificate templates, whereas the constraint solving part can be fully delegated to our tool. \polyqent\ is implemented in Python and publicly available 
on GitHub \footnote{\href{https://github.com/ChatterjeeGroup-ISTA/polyqent}{https://github.com/ChatterjeeGroup-ISTA/polyqent}}.
It allows users to provide constraints as input in the SMT-LIB syntax~\cite{BarFT-SMTLIB}, which is a standard and widely used input format. \polyqent\ also automates the selection of the positivity theorem to be used (Handelman's theorem, Putinar's theorem or Farkas' lemma) in order to achieve most efficient constraint solving while providing the soundness and relative completeness guarantees.


\paragraph{Experimental evaluation.} 
We experimentally evaluate \polyqent\ on several benchmarks collected from the literature on termination and
non-termination analysis in polynomial programs, termination of probabilistic programs and polynomial program synthesis. While all these problems could also be directly solved using off-the-shelf SMT solvers that support quantifier elimination, e.g.~Z3~\cite{de2008z3}, our experimental results show {\em significant improvements in runtime} when positivity theorems are used to eliminate quantifier alternation.

\paragraph{Comparison to constrained Horn clauses.} The problem of PQE solving syntactically resembles the more studied problem of constrained Horn clause (CHC) solving. CHC solving is a classical approach to program verification~\cite{Grebenshchikov12} with many readily available tools, e.g.~\cite{GurfinkelKKN15,HojjatR18,KomuravelliGC16,abs-1907-03998,FedyukovichPMG18}. However, the goal of the PQE solving problem is fundamentally different from CHC solving, and methods for one problem are not readily applicable to the other problem. In CHC solving, the focus is on computing boolean predicates that together make the CHC valid. In contrast, template-based synthesis applications discussed above require computing {\em values of template variables} that together define a certificate conforming to a given template, where the template is specified as a boolean combination of symbolic linear or polynomial inequalities over program variables. Hence, what would be viewed as an uninterpreted predicate in CHC solving, becomes a fixed boolean combination of polynomial inequalities of a specified maximal polynomial degree in PQE solving. The existing CHC solvers are thus not applicable to the problem of PQE solving.

Finally, FreqHorn~\cite{FedyukovichPMG19} is a CHC solver that is able to generate interpretations for uninterpreted predicates in the form of polynomial inequalities. Once predicate interpretations are generated, the satisfiability of the resulting interpreted formula is checked by performing quantifier elimination via a technique based on Model-Based Projections~\cite{FedyukovichPMG18}, after which the resulting quantifier-free formula is solved via Gauss-Jordan elimination. Hence, the last step of FreqHorn’s methodology is also applicable to our problem. The novelty provided by our PolyQEnt is twofold. First, our quantifier elimination procedure is based on positivity theorems, which have recently been extensively utilized in program analysis and verification but for which tooling support is inexistent. Second, while the methodology of FreqHorn is in principle applicable to our problem setting, the tool itself only supports CHC solving over uninterpreted predicates. Hence, we could not perform a direct comparison.

\section{Tool Overview}\label{sec:overview}

In this section we provide an overview of our tool and discuss details of the considered problem, the tool's architecture and its backend design.


\subsection{Problem Statement}\label{sec:problem}

\def\bluecolor{\color{blue}}
\def\blackcolor{\color{black}}

The problem of {\em polynomial quantified entailment (PQE) solving} is concerned with computing a valuation of existentially quantified variables $t_1,\dots,t_m$ that make the following logical formula true
\begin{equation}\label{eq:pqe}
    \exists t \in \mathbb{R}^m.\,\, \bigwedge_{i=1}^K \Big( \forall x \in \mathbb{R}^n.\,\, \Phi^i(x,t) \Longrightarrow \Psi^i(x,t)\Big).
\end{equation}
Here, each $\Phi^i$ and $\Psi^i$ is a boolean combination of polynomial inequalities of the form $p(t_1,\dots,t_m,x_1,\dots,x_n) \bowtie 0$, with $p$ a polynomial function and $\bowtie\, \in \{\geq,>\}$. We refer to each entailment $\forall x \in \mathbb{R}^n.\,\, \Phi^i(x,t) \Longrightarrow \Psi^i(x,t)$ as a {\em polynomial quantified entailment (PQE)}, and to the formula in eq.~\eqref{eq:pqe} as a {\em system of PQEs}. 

In what follows, we consider systems of PQEs defined over the background theory of {\em real arithmetic}. However, our \polyqent\ is also applicable to PQEs defined over {\em unbounded integer arithmetic} (i.e.~mathematical integers). While our presentation will mostly focus on PQEs defined over real arithmetic, in Section~\ref{sec:backend} we discuss differences that arise in considering unbounded integer arithmetic and how \polyqent\ addresses them.

\paragraph{Canonical form of PQEs.} We say that a PQE $\forall x \in \mathbb{R}^n.\,\, \Phi(x,t) \Longrightarrow \Psi(x,t)$ is in the {\em canonical form}, if $\Phi$ is a conjunction of finitely many polynomial inequalities and $\Psi$ is a single polynomial inequality, i.e.~if $\Phi \equiv \bigwedge_{j=1}^{n} \Big(p_j(t_1,\dots,t_m,x_1,\dots,x_n) \bowtie 0 \Big)$ and $\Psi \equiv p(t_1,\dots,t_m,x_1,\dots,x_n) \bowtie 0.$
Each PQE can be translated into an equisatisfiable PQE in the canonical form, defined over the same set of free variables $t \in \mathbb{R}^m$ and universally quantified variables $x \in \mathbb{R}^n$. This is an important result, as this translation presents the preprocessing step of our \polyqent. The following proposition formally proves this claim. The proof, together with the procedure employed by \polyqent\ to achieve this translation, is provided in the Appendix.

\begin{restatable}{proposition}{canonical}
\label{prop:canonical}
    Each PQE can be translated into an equisatisfiable PQE in the canonical form, defined over the same sets of quantified variables.
\end{restatable}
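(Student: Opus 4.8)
The plan is to reduce an arbitrary PQE to canonical form in two independent stages: first normalizing the consequent $\Psi$ into a single inequality, and then normalizing the antecedent $\Phi$ into a conjunction of inequalities. Throughout, the guiding principle is that a boolean combination of polynomial inequalities can be put into conjunctive normal form (CNF) over the atomic predicates $p \bowtie 0$, and that each resulting clause of the entailment $\Phi \Rightarrow \Psi$ can be massaged separately while introducing neither new free variables $t$ nor new universally quantified variables $x$.

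First I would handle the \emph{consequent}. Writing $\Phi \Rightarrow \Psi$ and converting $\Psi$ to CNF, $\Psi \equiv \bigwedge_{k} C_k$ where each clause $C_k$ is a disjunction of atoms, the single entailment $\forall x.\,\Phi \Rightarrow \bigwedge_k C_k$ splits into the conjunction $\bigwedge_k \big(\forall x.\,\Phi \Rightarrow C_k\big)$, which is exactly the conjunction of several PQEs; since eq.~\eqref{eq:pqe} already permits a conjunction of $K$ entailments, this is harmless. It therefore suffices to reduce a single clause $\Psi \equiv \bigvee_{l} (q_l \bowtie 0)$ in the consequent to one inequality. The standard trick is to move the disjuncts into the antecedent via contraposition: an atom $q > 0$ can be flipped to its complement $-q \geq 0$ (and $q \geq 0$ to $-q > 0$), so that $\Phi \Rightarrow (q_1 \bowtie 0 \vee \dots \vee q_r \bowtie 0)$ becomes $\big(\Phi \wedge \bigwedge_{l=1}^{r-1} \overline{(q_l \bowtie 0)}\big) \Rightarrow (q_r \bowtie 0)$, which has a single inequality as its consequent while its antecedent remains a boolean combination of atoms.

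Next I would normalize the \emph{antecedent}. After the previous step the antecedent is an arbitrary boolean combination of atoms, which I convert to disjunctive normal form, $\Phi \equiv \bigvee_{j} D_j$ with each $D_j$ a conjunction of atoms. Using the logical equivalence $(\bigvee_j D_j) \Rightarrow \Psi \;\equiv\; \bigwedge_j (D_j \Rightarrow \Psi)$, the entailment again splits into a conjunction of PQEs, each of which now has a conjunction of inequalities as antecedent and a single inequality as consequent — precisely the canonical form. Collecting all the pieces, the original PQE has been rewritten as a finite conjunction of canonical PQEs, and since every transformation used is a logical equivalence over the same quantified variables $t$ and $x$, equisatisfiability is preserved and no variables are added.

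The main obstacle, and the point that requires care rather than cleverness, is the strict-versus-non-strict bookkeeping when complementing atoms: the negation of $p > 0$ is $p \leq 0$, i.e.~$-p \geq 0$, while the negation of $p \geq 0$ is $-p > 0$, so one must track which of $\{\geq, >\}$ each transformed inequality carries to guarantee that the rewritten formula is genuinely equivalent and that every produced atom still has the admissible form $p \bowtie 0$ with $\bowtie \in \{\geq, >\}$. A secondary concern is merely the potential blow-up in the number of clauses from the CNF/DNF conversions, but this affects only the size of the resulting system, not its correctness, and does not threaten equisatisfiability.
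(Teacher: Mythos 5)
Your proof is correct and matches the paper's argument in all essentials: both reduce the entailment to clauses of polynomial atoms via normal-form conversion and then move complemented atoms across the implication, relying on the same key observation that the complement of $p \geq 0$ (resp.\ $p > 0$) is again an admissible atom $-p > 0$ (resp.\ $-p \geq 0$). The only cosmetic difference is that the paper converts the whole implication $\Phi \Rightarrow \Psi$ to CNF in a single step and then rewrites each resulting clause $\bigvee_l A_l$ as $\bigwedge_{l<n}\neg A_l \Rightarrow A_n$, whereas you split the work into a consequent-CNF stage followed by an antecedent-DNF stage; both yield the same finite conjunction of canonical PQEs over the same quantified variables.
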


\subsection{Tool Architecture}\label{sec:architecture}


\paragraph{Architecture.} 
The tool takes as input a system of PQEs in the form as in eq.~\eqref{eq:pqe}. The input is provided in the SMT-LIB format~\cite{BarFT-SMTLIB}, alongside with an optional config file in the \texttt{.json} format (see the following paragraph for details). Examples are provided in the tool's repository. Note that we do not assume any logical structure of the polynomial inequalities in the PQEs, i.e.~polynomial predicates in each PQE can have arbitrary \texttt{and}/\texttt{or} logical connectives.

The input files are then parsed and the PQEs are translated to their equisatisfiable canonical forms as in Proposition~\ref{prop:canonical}. \polyqent\ then applies the appropriate positivity theorem to reduce the problem of PQE solving to solving a fully existentially quantified system of polynomial constraints (see Section~\ref{sec:backend} for details). The resulting fully existentially quantified system of polynomial constraints is then fed to an SMT-solver. In case when the ``UNSAT Core" heuristic is used, \polyqent\ will further process the output of the SMT-solver (see Section~\ref{sec:backend} for details). Finally, the output of \polyqent\ is either (1) \texttt{SAT} with a valuation of existentially quantified variables for which the system of PQEs is valid, (2) \texttt{UNSAT} if the SMT-solver proves unsatisfiability, or (3) \texttt{Unknown} if the SMT-solver returns unknown.


\paragraph{Configuration file (optional).}
  \polyqent\ has a default (and recommended) configuration, which does not require the user to provide the config file. However, we also allow the user to change some of the parameter values used by \polyqent\ (e.g. the positivity theorems used, the technical parameters, the SMT-solver, or the heuristics) by providing a \texttt{.json} config file. The technical details can be found in the tool's readme file and Appendix \ref{sec:config}.


\begin{remark}
    We integrated the commands for running Z3 and MathSAT5 into \polyqent. However, \polyqent\ also stores the obtained system of existentially quantified polynomial inequalities in an SMTLIB format output file which can then be fed to other SMT-solvers.
\end{remark}



\subsection{Backend Algorithms and Heuristics}\label{sec:backend}

We now overview the backend of our tool. Observe that the system of PQEs in eq.~\eqref{eq:pqe} contains quantifier alternation with existential quantification preceding universal quantification. As mentioned in Section~\ref{sec:intro}, constraints involving such quantifier alternation can in principle be solved directly by using an off-the-shelf SMT solver that supports quantifier elimination, e.g.~Z3~\cite{de2008z3}. However, decision procedures for solving such constraints are known to be highly unscalable and a major source of inefficiency~\cite{RENEGAR1992329}. PQE solving can be made {\em significantly more efficient} by first using positivity theorems to eliminate universal quantification and reduce the problem to solving a purely existentially quantified system of polynomial constraints. Our experiments in Section~\ref{sec:evaluation} support this claim. In what follows, we describe an overview of positivity theorems and also how \polyqent\ uses them for quantifier elimination. 

\paragraph{Overview of positivity theorems.} All three positivity theorems implemented in \polyqent, namely Handelman's theorem, Putinar's theorem and Farkas' lemma, consider constraints of the form
\begin{equation}\label{eq:implication}
\begin{split}
    \forall x \in \mathbb{R}^n. &\Big(f_1(x) \bowtie 0 \land \dots \land f_m(x) \bowtie 0 \Longrightarrow g(x) \bowtie 0 \Big),
\end{split}
\end{equation}
where $f_1, \dots, f_m,g$ are polynomials over real-valued variables $x_1,\dots,x_n$ and each $\bowtie \,\in \{\geq,>\}$ (we discuss the unbounded integer variables case in the following paragraph). These theorems provide {\em sound} translations of the implication in eq.~\eqref{eq:implication} into a purely existentially quantified system of polynomial inequalities over newly introduced auxiliary symbolic variables. Translations are sound in the sense that, if the obtained purely existentially quantified system of constraints is satisfiable, then the implication in eq.~\eqref{eq:implication} is valid. Moreover, these theorems provide additional sufficient conditions under which the translation is also {\em complete}, i.e.~the implication in eq.~\eqref{eq:implication} and the resulting existentially quantified system are equisatisfiable:
\begin{itemize}
    \item \textbf{Farkas' lemma.} Farkas' lemma~\cite{farkas02,matouvsek07} provides a sound and complete translation if all $f_i$'s and $g$ are linear expressions.
    \item \textbf{Handelman's theorem.} Handelman's theorem~\cite{handelman88} provides a sound translation if all $f_i$'s are linear expressions and $g$ is a polynomial expression. Moreover, the translation can be made complete if in addition all inequalities $f_i(x_1,\dots,x_n) \bowtie 0$ are non-strict, the inequality $g(x_1,\dots,x_n) \bowtie 0$ is strict and the set $\{(x_1,\dots,x_n)\mid \forall 1\leq i\leq n.\, f_i(x_1,\dots,x_n) \geq 0\} \subseteq \mathbb{R}^n$ is bounded.
    \item \textbf{Putinar's Theorem.} Putinar's theorem~\cite{putinar93} provides a sound translation if all $f_i$'s and $g$ are polynomial expressions. Moreover, the translation can be made complete if in addition all inequalities $f_i(x_1,\dots,x_n) \bowtie 0$ are non-strict, the inequality $g(x_1,\dots,x_n) \bowtie 0$ is strict and the set $\{(x_1,\dots,x_n)\mid \forall 1\leq i\leq n.\, f_i(x_1,\dots,x_n) \geq 0\} \subseteq \mathbb{R}^n$ is bounded. 
\end{itemize}

\paragraph{Polynomial unbounded integer arithmetic.} While positivity theorems consider polynomials over real-valued variables, the resulting translations remain sound under unbounded integer arithmetic. In this case, strict inequalities can always be treated as non-strict by incrementing the appropriate side of the inequality by $1$, and the above yield sound but incomplete translations. The usage of positivity theorems in polynomial unbounded integer arithmetic was discussed in~\cite{Chatterjee21}. \polyqent\ provides support for unbounded integer arithmetic PQEs.

\paragraph{Positivity theorems in \polyqent.} \polyqent\ implements the translations via Farkas' lemma, Handelman's theorem and Putinar's theorem, and uses them to eliminate quantifier alternation in the system of PQEs. When applied to a PQE in the canonical form, the translation yields a purely existentially quantified system of polynomial constraints. Hence, the system of PQEs is translated into a purely existentially quantified formula, with the first-order variables being the existentially quantified variables $t \in \mathbb{R}^m$ as in eq.~\eqref{eq:pqe} as well as new symbolic variables introduced by the translation. 
For each PQE, the default configuration of \polyqent\ automatically chooses the positivity theorem that leads to a most efficient translation, while satisfying the soundness and relative completeness requirements of each theorem listed above, with Farkas' Lemma being the most efficient, followed by Handelman's theorem, and finally Putinar's Theorem. Alternatively, by providing the \texttt{.json} config file as described in Section~\ref{sec:architecture}, the user can opt for a different positivity theorem to be used.

In Appendix~\ref{subsec:geometry}, we provide formal statements of each positivity theorem that \polyqent\ implements. When invoked in \polyqent\ each theorem has some input parameters which are set to default values, that can be modified in the config file (See Appendix \ref{subsec:geometry} for details). 

\paragraph{Heuristics.} We conclude by outlining two heuristics that we implemented in \polyqent\ and that we observed to improve the tool's performance. The effect of each heuristic is studied in our experimental evaluation in Section~\ref{sec:evaluation}:
\begin{enumerate}
    \item {\em Assume-SAT.} For a PQE to be valid, either (1)~the left-hand-side of the entailment needs to be satisfiable and to imply the right-hand-side at all satisfying points, or (2)~the left-hand-side of the entailment needs to be unsatisfiable. Positivity theorems translate a system of PQEs into a purely existentially quantified system of polynomial constraints, whose satisfiability implies either (1) or (2). The Assume-SAT heuristic instead collects a system of constraints whose satisfiability only implies (1). This heuristic is sound but it leads to incompleteness, as (2) also implies that the system of PQEs is satisfiable. However, we observed that the Assume-SAT heuristic can sometimes considerably reduce the size of the obtained system of constraints, which can make the subsequent SMT solving step significantly more efficient. 
    \item {\em UNSAT core.} This heuristic was proposed in~\cite{Goharshady23}, a work which uses positivity theorems for program synthesis. Since the positivity theorem translations introduce a large number of fresh symbolic variables that are now existentially quantified, the idea behind the heuristic is to first try to solve the resulting system of constraints while adding additional constraints that set the values of some of these newly introduced symbolic variables to $0$. If the SMT-solver returns \texttt{SAT}, then the original system is satisfiable as well. Otherwise, SMT solvers such as Z3~\cite{de2008z3} and MathSAT~\cite{cimatti2013mathsat5} can return an unsatifsiability core, a subset of constraints that are unsatisfiable themselves. If the core contains none of the newly added constraints, it implies that the original system was unsatifiable. Otherwise, \polyqent\ removes the newly added $t=0$ constraints that are in the core and repeats this procedure.
\end{enumerate}

\section{Experimental Evaluation}\label{sec:evaluation}

We evaluate the performance of our tool on three benchmark sets in the following three subsections. The goal of our experiments is to illustrate (1)~soundness of the tool, (2)~its ability to solve PQEs that arise in program analysis literature, (3)~the necessity of using positivity theorems for quantifier elimination as opposed to feeding PQEs to an SMT solver directly, and (4)~to study the performance of different combinations of our two heuristics and different SMT solvers. Benchmarks are provided in the $\texttt{.smt2}$ format. All the experiments were conducted on a Debian 11 machine with AMD EPYC 9654 
2.40 GHz CPU and 6 GB RAM with a timeout of 180 seconds.

\paragraph{Baselines.} In order to illustrate the necessity of using positivity theorems for quantifier elimination, on all three benchmark sets we compare \polyqent\ against baseline methods which directly use Z3(v4.13.4)~\cite{de2008z3}  and CVC5(v1.2.0)~\cite{cvc5} to solve the system of PQEs, i.e.~without separately performing quantifier elimination.

\subsection{Termination and Non-Termination}\label{sec:experterm}

\noindent The first benchmark set consists of systems of PQEs that arise in termination analysis of programs. We consider TermComp'23~\cite{termcomp}, C-Integer category, benchmark suite that consists of $335$ non-recursive programs written in C. Initial value of every program variable is assumed to be in the range $[-1024,1023]$. The goal is to either prove termination or non-termination of each program.

\paragraph{Extraction of PQEs.} We extract PQEs for termination and non-termination proving as follows:

For termination proving, we use the template-based synthesis method for computing ranking functions~\cite{AsadiC0GM21}. For each program, as is common in termination analysis, we first used ASPIC~\cite{FeautrierG10} to generate a supporting invariant with respect to which ranking function is to be computed. We then use the method of~\cite{AsadiC0GM21} to extract two systems of PQEs that each encode the ranking function synthesis problem -- one for the linear (polynomial degree 1) and one for the quadratic (polynomial degree 2) templates. Finally, we run \polyqent\ with Farkas' lemma to solve the first system of PQEs, and with Putinar's theorem to solve the second system of PQEs, respectively.

For non-termination proving, we use the template-based synthesis method for computing a non-termination certificate of~\cite{Chatterjee21}. We use the same parameters as in the `best config' setting of the artifact of~\cite{Chatterjee21} to define the template for the non-termination certificate. We run the method of~\cite{Chatterjee21} to generate the system of PQEs that encodes the non-termination certificate synthesis problem. Finally, to solve the resulting system of PQEs, we run \polyqent\ with Farkas' lemma in 101/103 cases where `best config' prescribes linear templates, and with Putinar's theorem in 2/103 cases where it prescribes polynomial templates.
\paragraph{Results.} Table~\ref{table:termcomp} shows a summary of our results on \polyqent's performance on PQEs coming from termination analysis. Clearly, \polyqent\ performs far better than using Z3 or CVC5 directly for quantifier elimination. Comparing performance of different heuristics, we observe several interesting points:
\begin{itemize}
    \item Applying heuristic \texttt{\satheu} (Assume-SAT) helps \polyqent\ greatly, especially when using MathSAT as the backend solver. 
    \item Applying heuristic \texttt{\coreheu} (UNSAT Core) does not prove any unique cases that other configurations of \polyqent\ could not prove. However, Z3 solves more instances than MathSAT when both are equipped with this heuristic. 
    \item While Z3 solves more instances of PQEs coming from termination analysis, MathSAT outperforms it in solving PQEs coming from non-termination analysis. This suggests running several SMT solvers in parallel in order to achieve the best results. 
\end{itemize}

\paragraph{Runtime comparison.} Fig.~\ref{fig:termcomp-plots} plots the number of instances solved by each tool against runtime. For several benchmarks the direct-Z3 method terminates nearly instantly. 
Apart from them, it can be seen that Z3 is faster than MathSAT in settings where \texttt{\satheu} heuristic is not applied, but applying \texttt{\satheu} makes MathSAT slightly more efficient than Z3. Moreover, compared to not using any heuristics, applying \texttt{H1} and \texttt{H2} results in a speed-up in 220 and 32 benchmarks, respectively.

\begin{table}[t]
	\centering
	\texttt{
        \resizebox{\textwidth}{!}
		{
			\fontsize{6pt}{7pt}\selectfont
                \begin{tabular}{|P{18mm}|P{5mm}|P{5mm}|P{5mm}|P{5mm}|P{5mm}|P{5mm}|P{5mm}|P{5mm}|P{5mm}|P{5mm}|P{5mm}|P{5mm}|P{10mm}|P{10mm}|}
                    \hline 
                     \multirow{2}{*}{Specification} & \multicolumn{3}{c|}{Base} & \multicolumn{3}{c|}{Base+\satheu} & \multicolumn{3}{c|}{Base+\coreheu} & \multicolumn{3}{c|}{Base+\satheu+\coreheu} & \multirow{2}{*}{Dir. Z3} & \multirow{2}{*}{Dir. CVC5}\\
                     \cline{2-13}
                     & MS 5 & Z3 & U. & MS 5 & Z3 & U. & MS 5 & Z3 & U.  & MS 5 & Z3& U. & &
                     \\
                     \hline 
                     Termination & 44 & 154 & 154 & 148 & 153 & 153 & 43 & 109 & 113 & 125 & 132 & 145 & 103 & 9 \\
                     \hline 
                     Non-Termination & 72 & 76 & 78 & 101 & 87 & 103 & 61 & 75 & 76 & 81 & 82 & 89 & 39 & 9 \\
                     \hline 
                     \hline 
                     Avg. Time (s) & 4.4 & 3.9   & 1.6 &	0.5 &	2.5 &	0.5 &	4.3 &	3.5 &	3.4 &	2.7 &	3.7 &	3.6 &	1.6 & 19.4 \\
                     \hline
                \end{tabular}
            }
        }
    \caption{Summary of our results on TermComp benchmarks. The \texttt{Base} column shows the results of \polyqent\ without any heuristics. The next 3~columns enable heuristics \satheu (Assume-SAT) and \coreheu (UNSAT Core). The `Dir. Z3' and `Dir. CVC5' columns summarize the results obtained by applying Z3 and CVC5 directly. For each setting, we show the number of instances solved by MathSAT 5 (\texttt{MS 5}), Z3 and their union (\texttt{U.}).}
    \label{table:termcomp}
    \vspace{-4mm}
\end{table}

\begin{figure}[t]
    
    \begin{center}
    \hspace{-2mm}
    \begin{subfigure}[c]{0.50\textwidth}
    \includegraphics[width=\textwidth,trim={0.2cm 0 0.3cm 0},clip]{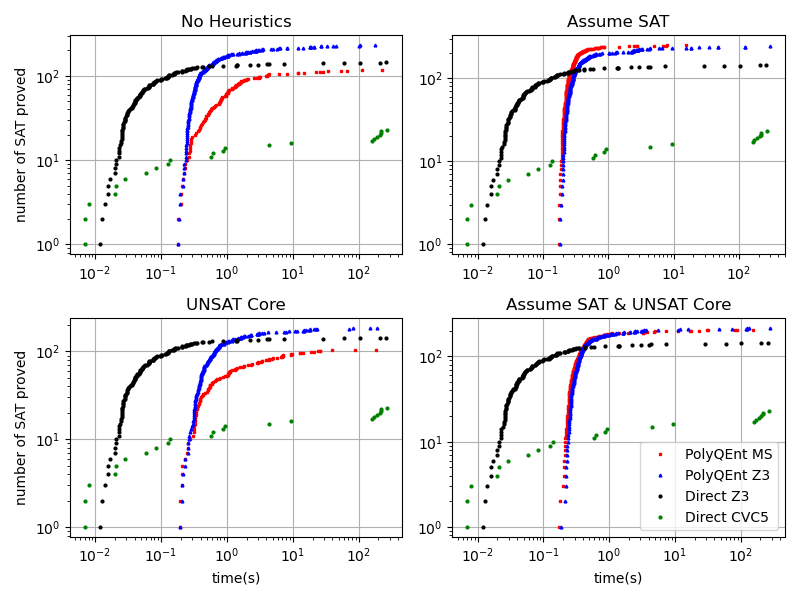}
    \end{subfigure}
    \hspace{5mm}
    \begin{subfigure}[c]{0.45\textwidth}
    \includegraphics[width=\textwidth, trim={0.6cm 0 1.6cm 0},clip]{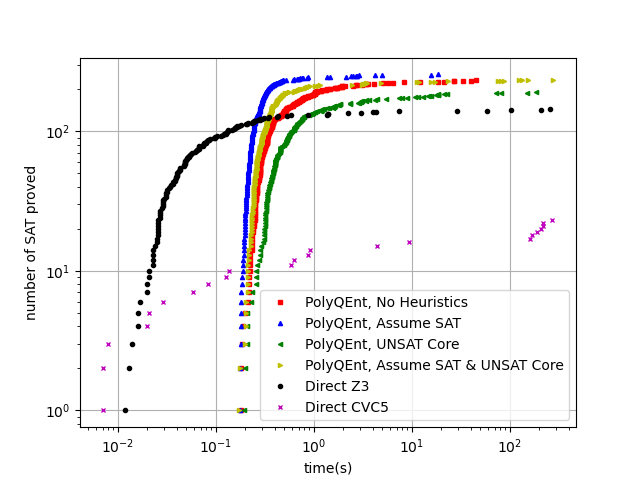}
    \end{subfigure}
    \end{center}
    \vspace{-6mm}
    \caption{Performance of \polyqent\ with different settings in comparison to baselines. Both axes are scaled logarithmically for better visualization. The leftmost four plots demonstrate the effect of using different solvers and heuristic settings, and the rightmost plot unionizes solvers to just compare heuristics.}
    \label{fig:termcomp-plots}
\end{figure}


\subsection{Almost-Sure Termination}

The second benchmark set comes from almost-sure termination proving in probabilistic programs. We collected benchmarks from two sources: i) 10 benchmarks from~\cite{Wang19} (Table 3) and ii) 7 benchmarks from~\cite{Kura19}. We choose these benchmarks because both works consider probabilistic program models of different applications which are required to be almost-surely terminating.

\paragraph{Extraction of PQEs.} To prove almost-sure termination, we use the template-based synthesis method for computing ranking supermartingales~\cite{Chatterjee16}. For each program, as is common in termination analysis, we used ASPIC~\cite{FeautrierG10} and StInG~\cite{SankaranarayananSM04} to generate supporting invariants with respect to which ranking supermartingale is to be computed. We then fix a quadratic (polynomial degree~2) template for the ranking supermartingale and use the method of~\cite{Chatterjee16} to extract a system of PQEs that encodes the synthesis problem. Finally, we use \polyqent\ with Handelman's theorem to solve the system of PQEs.

\paragraph{Results.} Table~\ref{table:AST} shows a summary of our results on \polyqent's performance on PQEs coming from almost-sure termination analysis. Applying Z3 or CVC5 directly is not successful on any of the benchmarks. On the other hand, \polyqent\ successfully solves 13 out of 17 instances.
Moreover, it can be seen that in most of the settings, running one SMT-solver alone does not provide the best results. This again suggests running several SMT solvers in parallel. 





\begin{table}[t]
	\centering
	\texttt{
        \resizebox{\textwidth}{!}
		{
			\fontsize{6pt}{7pt}\selectfont
                \begin{tabular}{|P{18mm}|P{5mm}|P{5mm}|P{5mm}|P{5mm}|P{5mm}|P{5mm}|P{5mm}|P{5mm}|P{5mm}|P{5mm}|P{5mm}|P{5mm}|P{10mm}|P{10mm}|}
                    \hline 
                     \multirow{2}{*}{Benchmark set} & \multicolumn{3}{c|}{Base} & \multicolumn{3}{c|}{Base+\satheu} & \multicolumn{3}{c|}{Base+\coreheu} & \multicolumn{3}{c|}{Base+\satheu+\coreheu} & \multirow{2}{*}{Dir. Z3} & \multirow{2}{*}{Dir. CVC5}\\
                     \cline{2-13}
                     & MS 5 & Z3 & U. & MS 5 & Z3 & U. & MS 5 & Z3 & U.  & MS 5 & Z3 & U. & &
                     \\
                     \hline 
                     From \cite{Wang19} & 7 & 6 & 7 & 7 & 6 & 7 & 7 & 6 & 7 & 7 & 5 & 7 & 0 & 0 \\
                     \hline 
                     From \cite{Kura19} & 6 & 5 & 6 & 5 & 6 & 6 & 5 & 4 & 5 & 6 & 4 & 6 & 0 & 0\\
                     \hline 
                     \hline 
                     Avg. Time (s) & 10.4 &	6.1 &	10.4 &	9.5 &	3.5 &	9.7 &	10.1 &	3.4 &	10.1 &	10.9 &	2.0 &	10.9 & NA & NA \\
                     \hline 
                \end{tabular}
            }
        }
        \caption{Results on the second benchmark set.}
        \label{table:AST}
        \vspace{-1em}
\end{table}

\subsection{Synthesis}

The third benchmark set comes from polynomial program synthesis, where we collect $32$ benchmarks from \textit{PolySynth}~\cite{Goharshady23}. Each benchmark is a non-deterministic program that contains holes and a desired specification. PolySynth uses a template-based technique to synthesize suitable polynomial expressions for the holes such that the specification is satisfied.  

\paragraph{Extraction of PQEs.} We first use the method of~\cite{Goharshady23} to extract a system of PQEs for polynomial program synthesis. For each benchmark, we ran \polyqent\ using Farkas lemma as well as Handelman's and Putinar's theorem with polynomial degree~2 templates. For brevity, we present a union of these runs, where we consider the faster setting whenever several of them worked. 

\paragraph{Results.} Table~\ref{table:synth} shows a summary of our results on \polyqent's performance on PQEs coming from program synthesis. We note that for two benchmarks (namely, \texttt{positive\_square\_with\_holes} and \texttt{positive\_square\_with\_number\_holes}) the Direct Z3 method could find a solution while \polyqent\ equipped with Farkas and Handelman could not. However, using Putinar with polynomial degree 2, \polyqent\ can solve those instances as well. Other than that, \polyqent\ outperforms the Direct Z3 and Direct CVC5 approaches both in terms of the number of solved instances and runtime. Comparing the performance of heuristics, it can be seen that the UNSAT Core heuristic slightly outperforms the Assume-SAT heuristic. 

\begin{table}[t]
	\centering
\vspace{-4mm}
	\texttt{
        \resizebox{\textwidth}{!}
		{
			\fontsize{6pt}{7pt}\selectfont
                \begin{tabular}{|P{18mm}|P{5mm}|P{5mm}|P{5mm}|P{5mm}|P{5mm}|P{5mm}|P{5mm}|P{5mm}|P{5mm}|P{5mm}|P{5mm}|P{5mm}|P{10mm}|P{10mm}|}
                    \hline 
                     \multirow{2}{*}{Benchmark Set} & \multicolumn{3}{c|}{Base} & \multicolumn{3}{c|}{Base+\satheu} & \multicolumn{3}{c|}{Base+\coreheu} & \multicolumn{3}{c|}{Base+\satheu+\coreheu} & \multirow{2}{*}{Dir. Z3} & \multirow{2}{*}{Dir. CVC5}\\
                     \cline{2-13}
                     \cline{2-13}
                     & MS 5 & Z3 & U. & MS 5 & Z3 & U. & MS 5 & Z3 & U. & MS 5 & Z3 & U. & &
                     \\
                     \hline 
                     From \cite{Goharshady23} &  28 & 30 & 30 & 28 & 29 & 29 & 28 & 30 & 30 & 28 & 29 & 29 & 24 & 0 \\
                     \hline 
                     \hline 
                     Avg. Time (s) & 3.2 & 2.6 & 2.6 & 2.4 & 2.4 & 2.4 & 2.9 & 2.7 & 2.6 & 2.6 & 2.5 & 2.5 & 14.0 & NA \\
                     \hline 
                \end{tabular}
            }
        }
        \caption{Results on the third benchmark set.}
        \label{table:synth}
        \vspace{-9mm}
\end{table}

\paragraph{Concluding Remarks.}
We presented our tool \polyqent\ for solving systems of polynomial quantified entailments, a problem that arises in many template-based synthesis methods for program analysis and verification. The significance of \polyqent\ is that, for template-based synthesis, it separates the task of certificate design, which future research can focus on, and the task of polynomial constraint solving, for which \polyqent\ provides an efficient tool support. Future work includes studying further heuristics towards making \polyqent\ even more efficient for solving systems of polynomial quantified entailments.

%
%
%
\bibliographystyle{splncs04}
\bibliography{bibliography}

\newpage
\appendix
\section{Proof of Proposition \ref{prop:canonical}}

\canonical*

\begin{proof}
    Consider a system of PQEs as in eq.~\eqref{eq:pqe}
    \[ \exists t \in \mathbb{R}^m.\,\, \bigwedge_{i=1}^K \Big( \forall x \in \mathbb{R}^n.\,\, \Phi^i(x,t) \Longrightarrow \Psi^i(x,t)\Big), \]
    with each $\Phi^i$ and $\Psi^i$ being an arbitrary boolean combination of polynomial inequalities over $x$ and $t$. It is a classical result that every formula in propositional logic can be translated into an equivalent formula in conjunctive normal form (CNF). Hence, we can translate each $\Phi^i(x,t) \Longrightarrow \Psi^i(x,t)$ into an equisatisfiable formula of the form
    \[ \Theta^i(x,t) = \bigwedge_{j=1}^{m_i}\bigvee_{l=1}^{n_i} (p^i_{j,l}(x,t) \bowtie^i_{j,l}\, 0), \]
    where each $p^i_{j,l}(x,t) \bowtie^i_{j,l}\, 0$ is a polynomial inequality. We then have
    \begin{equation*}
    \begin{split}
        &\exists t \in \mathbb{R}^m.\,\, \bigwedge_{i=1}^K \Big( \forall x \in \mathbb{R}^n.\,\, \Phi^i(x,t) \Longrightarrow \Psi^i(x,t)\Big) \\
        \equiv & \exists t \in \mathbb{R}^m.\,\, \forall x \in \mathbb{R}^n.\,\, \bigwedge_{i=1}^K \Big( \Phi^i(x,t) \Longrightarrow \Psi^i(x,t)\Big) \\
        \equiv & \exists t \in \mathbb{R}^m.\,\, \forall x \in \mathbb{R}^n.\,\, \bigwedge_{i=1}^K \Big(\bigwedge_{j=1}^{m_i}\bigvee_{l=1}^{n_i} (p^i_{j,l}(x,t) \bowtie^i_{j,l}\, 0) \Big) \\
        \equiv & \exists t \in \mathbb{R}^m.\,\, \forall x \in \mathbb{R}^n.\,\, \bigwedge_{i=1}^K \bigwedge_{j=1}^{m_i}\Big(\bigvee_{l=1}^{n_i} (p^i_{j,l}(x,t) \bowtie^i_{j,l}\, 0) \Big) \\
        \equiv & \exists t \in \mathbb{R}^m.\,\, \forall x \in \mathbb{R}^n.\,\, \bigwedge_{i=1}^K \bigwedge_{j=1}^{m_i} \\
        &\quad \Big((-p^i_{j,1}(x,t) \bowtie^i_{j,l}\, 0) \land \dots \land (-p^i_{j,1}(x,t) \bowtie^i_{j,n_i-1}\, 0) \Longrightarrow (p^i_{j,1}(x,t) \bowtie^i_{j,n_i}\, 0) \Big).
    \end{split}
    \end{equation*}
    The last formula yields an equisatisfiable system of PQEs with each PQE in the canonical form. Note that the above also yields a procedure for translating individual PQEs into their canonical forms. This concludes the proof. \hfill\qed
    
\end{proof}

\section{Example}

\begin{example} \label{example:schema}
	To illustrate how the problem of solving a system of PQEs arises in template-based synthesis for program analysis, we consider an example of proving termination of programs by computing ranking functions. Consider the program in Fig. \ref{example:schema} (left) and termination as a specification. We describe the three steps of the classical template-based method for synthesizing linear ranking functions~\cite{Colon03}. We consider linear programs and ranking functions for the simplicity of the example, however this method was extended to the setting of polynomial programs and ranking functions in~\cite{Chatterjee16} and is supported in \polyqent:
	\begin{compactenum}
		\item To find a linear ranking function, we first fix a symbolic linear expression template for each cutpoint location in the program:
		\[
		T_l(x) = \begin{cases}
			t_1 x + t_2 & \textit{ if ~} l = l_1 \\
			t_3 x + t_4 & \textit{ if ~} l = l_t
		\end{cases}
		\]
		where $t_1,t_2,t_3$ and $t_4$ are the symbolic template variables.
		\item A system of PQEs in Fig.~\ref{example:schema} (right) encodes that $T$ is a ranking function.
		\item Hence, any valuation of template variables $t_1,t_2,t_3,t_4$ that makes all PQEs valid gives rise to a correct ranking function for the program in Fig.~\ref{example:schema} (left).
	\end{compactenum} 
\end{example}

\begin{figure}[t]
	\centering
	\begin{subfigure}[t]{0.40\textwidth}
		\begin{lstlisting}[frame=none,numbers=none,escapechar=@,mathescape=true]
Invariant: 
  -1024 $\leq$ x $\leq$ 1023
$l_1$:  while x $ \geq$ 1 do
      x := x-1
   done
$l_t$:
		\end{lstlisting}
	\end{subfigure}
	\hfill
	\begin{subfigure}[t]{0.53\textwidth}
		\[
		\hspace{-4cm}
		\begin{split}
			&~~~\exists \blueit{t_1,t_2,t_3,t_4} \in \mathbb{R}\\
			&\begin{cases}
				\forall x \in \R, \Big(-1024 \leq x \leq 1023 \Rightarrow \blueit{t_1} x + \blueit{t_2} \geq 0 \Big) \\
				\forall x \in \R, \Big(-1024 \leq x \leq 1023 \wedge x \geq 1 \\ 
				~~~~~\Rightarrow \blueit{t_1} (x-1) + \blueit{t_2} \geq 0 \wedge \blueit{t_1} (x-1) + \blueit{t_2} \leq \blueit{t_1} x + \blueit{t_2} - 1 \Big) \\ 
				\forall x \in \R, \Big(-1024 \leq x \leq 1023 \wedge x < 1 \\~~~~~ \Rightarrow \blueit{t_3} x + \blueit{t_4} \geq 0 \wedge \blueit{t_3} x + \blueit{t_4} \leq \blueit{t_1} x + \blueit{t_2} -1 \Big)
			\end{cases}
		\end{split}
		\]
	\end{subfigure}
	\caption{A simple program (left) and the corresponding system of polynomial Horn clauses for computing ranking function that proves termination (right).}
	\label{fig:example}
\end{figure}

\section{Config File} \label{sec:config}

\begin{figure}
    \centering
    \includegraphics[width=\linewidth]{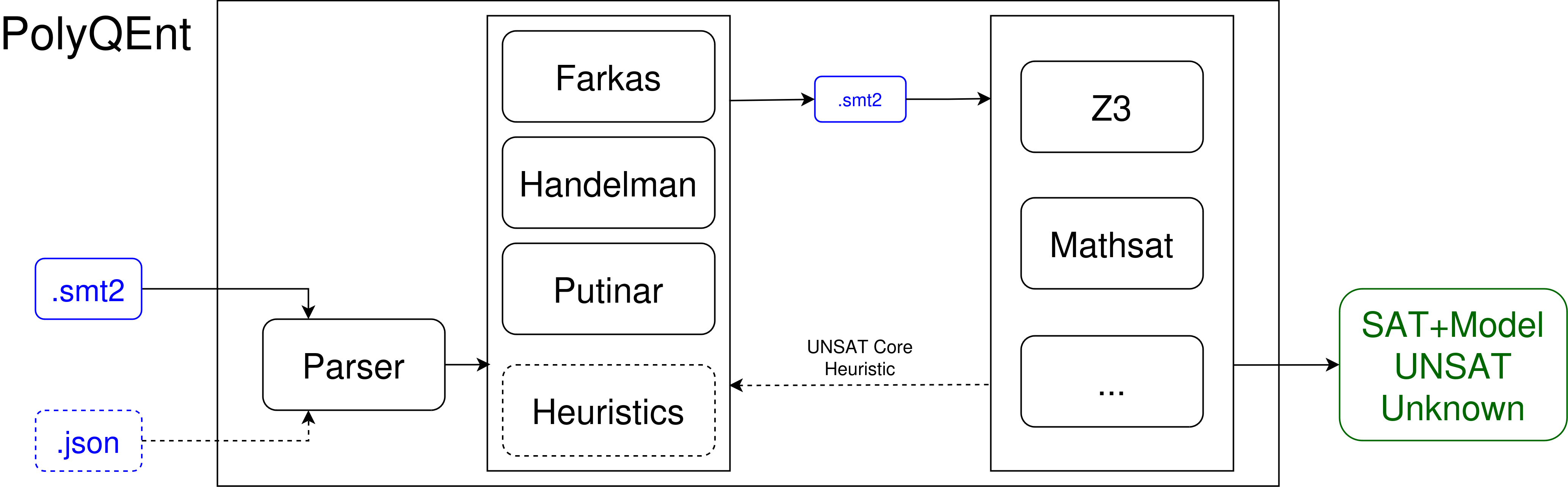}
    \caption{Architecture of \polyqent}
    \label{fig:enter-label}
\end{figure}

 \polyqent\ has a default (and recommended) configuration, which does not require the user to provide the config file. However, we also allow the user to change some of the parameter values used by \polyqent\ and the set of heuristics used by providing a \texttt{.json} config file:

\begin{enumerate}
	\item Positivity theorem to be used (\texttt{farkas}, \texttt{handelman} or \texttt{putinar}). The default configuration of \polyqent\ automatically chooses the most efficient positivity theorem to be applied while preserving soundness and relative completeness guarantees (see Section~\ref{sec:backend} for details). However, this optional parameter allows the user to opt for a different positivity theorem whose application is sound but not relatively complete, but may sometime lead to a more efficient constraint solving.
	\item Parameters of the positivity theorem to be used. See Appendix~\ref{subsec:geometry} for details. The default parameter values are also specified in Appendix~\ref{subsec:geometry}.
	\item The set of heuristics (if any) to be applied. The default configuration applies the Assume-SAT heuristic (see Section~\ref{sec:backend} for details).
	\item An SMT-solver to be used to solve the fully existentially quantified system of polynomial constraints resulting from applying the positivity theorem. The default configuration uses  \texttt{z3} \cite{de2008z3}, however \polyqent\ also supports \texttt{mathsat} \cite{cimatti2013mathsat5}.
	\item Background theory to be considered. The default is real arithmetic, however the config file allows the user to choose unbounded integer arithmetic.
\end{enumerate}

\section{Positivity Theorems}\label{subsec:geometry}

We now present the formal details behind the positivity theorems implemented in \polyqent. Provided are theorem statements, their use in \polyqent\ and tool parameters, the default configuration parameter values and how to change parameter values. All theorem claims are adopted from~\cite{AsadiC0GM21}, and we refer the reader to~\cite{AsadiC0GM21} for more details. Reading this section is optional for the users interested solely in running \polyqent\ with the default configuration.
 

\paragraph{Farkas' lemma.} We start by presenting Farkas' lemma, which is utilized by \polyqent\ when both the left-hand-side (LHS) and the right-hand-side (RHS) of PQEs in canonical form are specified in terms of linear inequalities.

\begin{theorem}[Farkas' lemma~\cite{farkas02}]
\label{thm:Farkas'}
    Consider a set $V = \{x_1, \ldots, x_n\}$ of real-valued variables and the following system of m linear inequalities over $V$
        \begin{equation*}
            \Phi := 
            \begin{cases}
                a_{1,0} + a_{1,1} \cdot x_1 + \ldots + a_{1,n} \cdot x_n \bowtie_1 0\\
                \hfil \vdots\\
                a_{m,0} + a_{m,1} \cdot x_1 + \ldots + a_{m,n} \cdot x_n \bowtie_m 0\\
            \end{cases}
        \end{equation*}
        where each $\bowtie_i\, \in \{>, \ge\}$. Exactly one of the following is true
        \begin{itemize}
            \item[F1)] $\Phi$ is satisfiable. Then, $\Phi$ entails the linear inequality $$\psi := c_0 + c_1 \cdot x_1 + \dots + c_n \cdot x_n \ge 0$$
        if and only if $\psi$ can be written as non-negative linear combination of the inequalities in $\Phi$ and $1 \ge 0,$ i.e.~if and only if there exist non-negative real-valued coefficients $y_0,\dots,y_m$ such that:
        $$
        \begin{matrix}
        c_0 = y_0 + \sum_{i=1}^m y_i \cdot a_{i, 0}\\
        c_1 = \sum_{i=1}^m y_i \cdot a_{i, 1}\\
        \vdots\\
        c_n = \sum_{i=1}^m y_i \cdot a_{i, n}.
        \end{matrix}
        $$
        \item[F2)] $\Phi$ is unsatisfiable and $-1 \ge 0$ can be derived as above.
        \item[F3)] $\Phi$ is unsatisfiable and $0 > 0$ can be derived as above, with at least one of the strict inequalities having strictly positive coefficient $y_i > 0$.
        \end{itemize}
        
        If the inequality $\psi := c_0 + c_1 \cdot x_1 + \dots + c_n \cdot x_n > 0$ is strict, then the case (F1) should be modified by requiring at least one coefficient $y_i$ with $\bowtie_i\, = \{>\}$ to be strictly positive, i.e.~$y_i > 0$.
\end{theorem}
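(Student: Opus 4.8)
The plan is to deduce the entire trichotomy from a single infeasibility-certificate lemma and to prove that lemma by Fourier--Motzkin elimination. The soundness (``easy'') direction of (F1) is immediate: if $\psi$ is written as $y_0 + \sum_{i=1}^m y_i \cdot (a_{i,0} + \sum_j a_{i,j} x_j)$ with all $y_i \ge 0$, then at any point satisfying $\Phi$ each summand $y_i \cdot (\cdots)$ is $\ge 0$, so their sum $\psi$ is $\ge 0$; and it is $> 0$ as soon as some strict inequality enters with $y_i > 0$, which is exactly the added requirement in the strict variant. The content of the theorem therefore lies in the converse, completeness direction, together with the classification of the unsatisfiable case into (F2) and (F3).

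The engine I would establish first is the following: a finite system of affine inequalities over $\mathbb{R}$, with relations in $\{>, \ge\}$, is unsatisfiable if and only if a contradiction can be derived as a non-negative combination of its inequalities and the base fact $1 \ge 0$ --- either $-1 \ge 0$ (when no strict inequality is used with positive coefficient) or $0 > 0$ (when some strict inequality is). I would prove this by eliminating $x_1, \dots, x_n$ one variable at a time. The invariant maintained through elimination is that every inequality of the reduced system is a non-negative combination of the original inequalities and $1 \ge 0$, and that its relation is strict precisely when some strict original inequality enters that combination with positive coefficient. Once all variables are gone, the system consists of inequalities among constants; it is unsatisfiable iff one of these is false, and a false constant inequality is, up to positive scaling and adding a suitable multiple of $1 \ge 0$, exactly $-1 \ge 0$ or $0 > 0$.

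Given this lemma, (F2) and (F3) are its two horns applied to $\Phi$ itself, while (F1) follows by applying it to an augmented system. Concretely, $\Phi$ entails $\psi := c_0 + \sum_j c_j x_j \ge 0$ iff $\Phi \wedge (-\psi > 0)$ is unsatisfiable; feeding this to the lemma yields a non-negative combination of $\Phi$, $1 \ge 0$, and $-\psi > 0$ producing a contradiction. Solving for the coefficient $\lambda$ of $-\psi$ --- which must be positive, since $\lambda = 0$ would make $\Phi$ alone unsatisfiable, contradicting satisfiability in case (F1), and hence can be normalized to $1$ --- rearranges into the desired expression of $\psi$ as a non-negative combination. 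For the strict target $\psi > 0$ I would instead add $-\psi \ge 0$ and read off, from the requirement that the derived contradiction be $0 > 0$, the condition that at least one strict inequality of $\Phi$ participates with positive coefficient.

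The main obstacle I anticipate is the bookkeeping of strict versus non-strict relations throughout elimination, rather than any deep step. Keeping track of exactly when a combined inequality becomes strict --- that strictness propagates iff a strict inequality participates with positive coefficient --- is what distinguishes (F2) from (F3) and what justifies the extra $y_i > 0$ clauses in the strict cases; getting these side conditions to line up, and arguing that the coefficient of the negated goal normalizes to $1$, is the delicate part. An alternative I would keep in reserve is a separation argument: homogenize the affine system with a nonnegativity variable and invoke closedness of finitely generated cones, handling the strict inequalities on the relative boundary. This route is slicker but pushes the same strict/non-strict subtleties into the closedness and boundary analysis.
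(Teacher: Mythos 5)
The paper does not actually prove this theorem: it is imported verbatim from the literature (the appendix states that all positivity-theorem claims are adopted from the cited prior work), so there is no in-paper proof to compare against. Judged on its own, your strategy is the standard and correct one: prove the infeasibility-certificate form (Motzkin's transposition theorem, with strictness tracked through Fourier--Motzkin elimination), then obtain (F1) by adding the negated goal and normalizing the coefficient of $-\psi$ to $1$, with (F2)/(F3) being the two horns of the certificate lemma applied to $\Phi$ itself. The soundness direction and the $\lambda>0$ argument are fine, and the elimination invariant (a derived inequality is strict iff a strict ancestor enters with positive coefficient) is exactly the right bookkeeping.

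There is one genuine gap, in the strict case of (F1). You claim to ``read off, from the requirement that the derived contradiction be $0>0$,'' that some strict inequality of $\Phi$ gets a positive coefficient. But your own lemma does not let you require this: when $\Phi \wedge (-\psi \ge 0)$ is unsatisfiable, the certificate may instead be of the $-1 \ge 0$ type, using only non-strict inequalities. In that branch you obtain $\psi = (1+y_0)/\lambda + \sum_i (y_i/\lambda) f_i$ with a \emph{strictly positive constant term} and no strict $f_i$ involved --- a perfectly good witness that $\Phi \models \psi > 0$, but not of the form the modified (F1) demands. The system $\Phi = \{x \ge 0\}$ with $\psi = x+1$ realizes this: $\psi > 0$ is entailed, yet $\Phi$ contains no strict inequality at all. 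So your argument (and, as literally stated, the theorem's strict variant) needs the extra disjunct ``or $y_0 > 0$,'' i.e.\ the base fact must be allowed to contribute strictness as $1 > 0$. You should split the strict case into the two certificate types and handle the $-1 \ge 0$ branch explicitly; once that is done, the rest of the proposal goes through. (A smaller cosmetic point: (F2) and (F3) are not mutually exclusive, so ``exactly one'' should really be read as (F1) versus (F2)-or-(F3); your lemma naturally yields ``at least one of'' the two unsatisfiability certificates, which is all that is needed.)
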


Given a PQE in canonical form where both the LHS and the RHS are specified in terms of linear inequalities,  Farkas' lemma specifies the necessary and sufficient conditions for the LHS of the PQE to be unsatisfiable, i.e.~{\em (F2)} and {\em (F3)}, as well as for the LHS to be satisfiable and to entail the RHS, i.e.~{\em (F1)}. Hence, \polyqent\ translates each such PQE into a system of constraints defined by the Farkas' lemma. The application of Farkas' lemma in \polyqent\ introduces no new parameter values.

\paragraph{Handelman's theorem.} Next, we present Handelman's theorem. \polyqent\ utilizes this theorem when all inequalities on the LHS of a PQE in canonical form are linear, but the inequality on the RHS contains a polynomial of degree at least $2$. In order to formally present Handelman's theorem, we first need to define the notion of a monoid of a system of linear inequalities. Consider a set $V = \{x_1, \dots, x_n \}$ of real-valued variables and the following system of linear inequalities over $V$
     \begin{equation*}
        \Phi := \{
            f_1 \bowtie_1  0,
            \hfil \dots,
            f_m \bowtie_m 0
            \},
    \end{equation*}
where each $f_i = a_{i,0} + a_{i,1} \cdot x_1 + \ldots + a_{i,n} \cdot x_i$ and each $\bowtie_i\, \in \{>, \ge\}$. The {\em monoid} of~$\Phi$ of degree $d$, denoted $\text{\monoid}(\Phi, d)$, is the set of all polynomials over $V$ of degree at most $d$ that can be expressed as a product of linear expressions in $\Phi$, i.e.
\begin{equation*}
    \text{\monoid}(\Phi, d) := \Big\{\prod_{i=1}^m f_i^{k_i} | \forall i: k_i \in \mathbb{N}_0 \land \sum_{i=1}^m k_i \leq d\Big\}. 
\end{equation*}
We are now ready to formally state Handelman's theorem.

\begin{theorem}[Handelman's theorem~\cite{handelman88}]
\label{thm:handelman}
Consider a set $V = \{x_1, \dots, x_n \}$ of real-valued variables and the following system of $m$ linear inequalities over $V$
     \begin{equation*}
        \Phi := 
        \begin{cases}
            a_{1,0} + a_{1,1} \cdot x_1 + \ldots + a_{1,n} \cdot x_n \bowtie_1\,  0\\
            \hfil \vdots\\
            a_{m,0} + a_{m,1} \cdot x_1 + \ldots + a_{m,n} \cdot x_n \bowtie_m\, 0\\
        \end{cases}.
    \end{equation*}
where each $\bowtie_i\, \in \{>, \ge\}$. Suppose that $\Phi$ is satisfiable. Then, $\Phi$ entails polynomial inequality $g(x_1,\dots,x_n) \bowtie\, 0$ if
there exist a natural number $\bar{d}$, non-negative real numbers $y_0, y_1,\dots y_u$ 
and $h_1,\dots,h_u \in \monoid(\Phi, \bar{d})$ such that
$$
g = y_0 + \sum_{i=1}^u y_i \cdot h_i.
$$
Moreover, if the satisfiability set of $\Phi$ is bounded in $\mathbb{R}^n$, the linear inequalities $\bowtie_1,\dots,\bowtie_m$ are all non-strict and the polynomial inequality $\bowtie$ is strict, then $\Phi$ entails polynomial inequality $g(x_1,\dots,x_n) > 0$ {\em if and only if}
there exist a natural number $\bar{d}$, non-negative real numbers $y_0, y_1,\dots y_u$ 
and $h_1,\dots,h_u \in \monoid(\Phi, \bar{d})$ such that 
$g = y_0 + \sum_{i=1}^u y_i \cdot h_i$.
    
\end{theorem}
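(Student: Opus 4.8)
The plan is to prove the two assertions separately, since they have very different character. The first, \emph{soundness}, direction is immediate. Suppose $g = y_0 + \sum_{i=1}^u y_i h_i$ with every $y_i \geq 0$ and every $h_i \in \monoid(\Phi,\bar d)$. Fix any point $x$ satisfying all constraints of $\Phi$, so that $f_j(x) \bowtie_j 0$ and in particular $f_j(x) \geq 0$ for every $j$. Since each $h_i$ is a product $\prod_j f_j^{k_j}$ of such non-negative quantities, $h_i(x) \geq 0$, and hence $g(x) = y_0 + \sum_i y_i h_i(x) \geq 0$; thus $\Phi$ entails $g \geq 0$. The only subtlety is recovering strictness when the entailed inequality is strict, which is obtained by additionally requiring one summand (e.g. the constant $y_0$, or a $y_i h_i$ built from strict $f_j$) to be strictly positive throughout the satisfiability set, exactly as in the strict case of Farkas' lemma.

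The substantive content is the \emph{completeness} (only-if) direction, and here the strategy is to reduce the statement to P\'olya's theorem on the simplex. The idea is to use the affine functions $f_1,\dots,f_m$ themselves as new coordinates: the affine map $x \mapsto (f_1(x),\dots,f_m(x))$ sends the non-empty, compact polytope $P = \{x : f_j(x) \geq 0\}$ into the non-negative orthant $\R^m_{\geq 0}$, identifying $P$ with a polytope $Q$ cut out inside the orthant by the affine relations holding among the $f_j$. Boundedness of $P$ is precisely what guarantees that $Q$ is compact and that constants are dominated by the $f_j$ (the archimedean property), ruling out the recession directions that would otherwise obstruct the argument. A polynomial $g$ strictly positive on $P$ corresponds to a polynomial in the new coordinates that is strictly positive on $Q$.

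The key step is then homogenization. After introducing an auxiliary variable to homogenize the affine relations defining $Q$, strict positivity on the compact set $Q$ becomes strict positivity of a form on (a face of) the standard simplex, so P\'olya's theorem applies: a form strictly positive on the simplex, multiplied by a sufficiently high power of $\sum_i z_i$, acquires only non-negative coefficients. Unwinding the homogenization and the change of coordinates turns this positive-coefficient expansion into a representation $g = y_0 + \sum_i y_i h_i$ with $y_i \geq 0$ and $h_i \in \monoid(\Phi,\bar d)$, where the degree bound $\bar d$ is read off from the P\'olya exponent $N$. The strictness hypotheses enter exactly here: P\'olya requires strict positivity on a compact set, supplied by assuming $g > 0$ together with $P$ bounded, while non-strictness of the $f_j$ ensures the boundary of $P$ is included so that positivity is tested on all of $Q$.

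I expect the main obstacle to be the reduction/homogenization step itself. Making precise the passage between polynomials in $x$ and polynomials in the coordinates $(f_1,\dots,f_m)$ is delicate, because the coordinate map need not be injective and the $f_j$ satisfy non-trivial affine relations, so the representation of $g$ in the new coordinates is far from unique. Showing that among all such representations one can be chosen with non-negative coefficients --- which is what P\'olya delivers, and which fails without compactness --- is the crux of the argument, and is the point at which the boundedness assumption is indispensable.
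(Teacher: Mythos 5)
This theorem is presented in the paper as an imported result --- it is attributed to Handelman~\cite{handelman88}, and the appendix states that all theorem claims are adopted from~\cite{AsadiC0GM21} --- so there is no in-paper proof to compare your argument against; I can only judge the attempt on its own terms. Your soundness direction is complete and correct: every $h_i\in\monoid(\Phi,\bar d)$ is a product of expressions that are non-negative on the satisfiability set of $\Phi$, so $y_0+\sum_i y_i h_i$ is non-negative there. Your side remark about strictness is also well taken: as literally stated, the ``if'' half cannot deliver a \emph{strict} conclusion (take all $y_i=0$ and $g\equiv 0$), so the strict case genuinely requires an extra condition such as $y_0>0$; that is an imprecision in the statement rather than in your argument.

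The completeness direction, however, is a plan rather than a proof. The route you describe --- substitute $z_j=f_j(x)$, homogenize, and invoke P\'olya's theorem on the simplex --- is indeed the standard modern derivation of Handelman's theorem (essentially the Powers--Reznick argument), so the strategy is viable. But the step you yourself flag as the crux is exactly where all the content lives, and none of it is carried out: one must show that boundedness of $P=\{x: f_j(x)\ge 0\}$ forces the linear parts of $f_1,\dots,f_m$ to span $(\mathbb{R}^n)^*$, so that the substitution map is in fact an \emph{injective} affine embedding onto an affine subspace of $\mathbb{R}^m$ (your worry that it ``need not be injective'' is resolved by boundedness; the real issue is only the non-uniqueness of the lift of $g$ caused by the affine relations among the $f_j$); that some positive combination $\sum_j c_j f_j$ equals a positive constant on $\mathbb{R}^n$, which is the normalization needed to land on the standard simplex and the precise point where compactness is used; and that a lift $\tilde g(z)$ of $g$ can be corrected using those affine relations so that P\'olya's positive-coefficient expansion descends to a representation $g=y_0+\sum_i y_i h_i$ with $h_i\in\monoid(\Phi,\bar d)$. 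Since you explicitly leave this reduction open, the argument is incomplete precisely where it stops being routine.
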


Given a PQE in canonical form where inequalities on the LHS are linear but the inequality on the RHS is at least a degree $2$ polynomial, Handelman's theorem specifies the sufficient conditions for the LHS to be satisfiable and to entail the RHS. 
On the other hand, the conditions for the LHS to be unsatisfiable are as in the Farkas' Lemma, i.e.~conditions {\em (F2)} and {\em (F3)} in Theorem~\ref{thm:Farkas'}. Hence, \polyqent\ translates each such PQE into a system of constraints defined by Handelman's theorem as well as {\em (F2)} and {\em (F3)} in Farkas' lemma. 

The translation is sound. The translation is also relatively complete, subject to the assumptions stated in Theorem~\ref{thm:handelman}, i.e.~that the satisfiability set of the inequalities on the LHS is bounded, the inequalities on the LHS are non-strict and the inequality on the RHS is strict.


\paragraph{\polyqent\ parameters for Handelman's theorem.} \polyqent\ defines a parameter for the maximal polynomial degree $\bar{d}$ of the monoid to be used in the translation. The value used by the \polyqent\ default configuration is the maximal polynomial degree appearing in the system of PQEs. In the config
file, this value can be changed by setting a new value of the \texttt{\em degree\_of\_sat} parameter.

\paragraph{Putinar's theorem.} Finally, we present Putinar's theorem, and its extension for the case when the LHS of the entailment is unsatisfiable which was proved in~\cite{AsadiC0GM21}. \polyqent\ utilizes these theorems when both the LHS and the RHS of a PQE in canonical form contain at least one polynomial of degree at least $2$. In what follows, we say that a polynomial $h$ is {\em sum-of-squares}, if it can be written as a finite sum $h = \sum h_j^2$ for squares of polynomials $h_j$.

\begin{theorem}[Putinar's theorem~\cite{putinar93}]
\label{thm:putinar}
    Consider a set $V = \{x_1, \dots, x_n \}$ of real-valued variables and the following system of m polynomial inequalities over~$V$
     \begin{equation*}
        \Phi := \{
            f_1(x_1, \dots, x_n) \bowtie_1  0,
            \hfil \dots,
            f_m(x_1, \dots, x_n) \bowtie_m 0
            \}.
    \end{equation*}
where each $\bowtie_i\, \in \{>, \ge\}$. Suppose that $\Phi$ is satisfiable. Then, $\Phi$ entails a polynomial inequality $g(x_1,\dots,x_n) \bowtie\, 0$ if
there exist positive real number $y_0$ and sum-of-squares polynomials $h_0,\dots,h_m$ such that
$$
g = y_0 + h_0 + \sum_{i=1}^m h_i \cdot f_i.
$$
Moreover, if the satisfiability set of at least one $f_i \geq 0$ is topologically compact (i.e.~closed and bounded), the linear inequalities $\bowtie_1,\dots,\bowtie_m$ are all non-strict and the polynomial inequality $\bowtie$ is strict, then $\Phi$ entails polynomial inequality $g(x_1,\dots,x_n) > 0$ {\em if and only if} there exist positive real number $y_0$ and sum-of-squares polynomials $h_0,\dots,h_m$ such that $g = y_0 + h_0 + \sum_{i=1}^m h_i \cdot f_i$.

\end{theorem}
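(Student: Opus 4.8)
The plan is to treat the two assertions separately: the first statement (and the ``if'' half of the biconditional in the second statement) is an elementary soundness argument, whereas the ``only if'' half of the second statement is the genuine content of Putinar's Positivstellensatz. I would first dispatch soundness. Assume a representation $g = y_0 + h_0 + \sum_{i=1}^m h_i \cdot f_i$ with $y_0 > 0$ and every $h_i$ a sum of squares, and fix an arbitrary $x \in \mathbb{R}^n$ satisfying $\Phi$; then $f_i(x) \ge 0$ for every $i$, since a strict inequality implies its non-strict weakening. Because each $h_i$ is a sum of squares we have $h_i(x) \ge 0$, hence $h_i(x) \cdot f_i(x) \ge 0$ for all $i$; adding $h_0(x) \ge 0$ and $y_0 > 0$ gives $g(x) \ge y_0 > 0$. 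Thus $g(x) > 0$, which entails $g(x) \bowtie 0$ for either choice of $\bowtie \in \{\ge, >\}$, so $\Phi$ entails $g \bowtie 0$. This covers the first statement and the ``if'' direction of the second.

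For the converse I would invoke Putinar's representation theorem in its standard form: if the quadratic module $M = \{\sigma_0 + \sum_{i=1}^m \sigma_i f_i : \sigma_i \text{ sum of squares}\}$ is Archimedean, then every polynomial that is strictly positive on $K = \{x \in \mathbb{R}^n : f_i(x) \ge 0 \text{ for all } i\}$ belongs to $M$. Two adaptations are then needed. The stated hypotheses---all $f_i \ge 0$ non-strict and some superlevel set $\{f_i \ge 0\}$ compact---are precisely what I would use to certify that $K$ is compact and that $M$ is Archimedean (in the intended application the compact superlevel set is cut out by a bounding constraint, so that a ball inequality lies in $M$). To obtain the \emph{strictly positive} constant $y_0$ rather than mere membership $g \in M$, I would exploit compactness of $K$: a polynomial $g$ with $g > 0$ on the compact set $K$ attains a minimum $\delta > 0$ there, so $g - \delta/2 > 0$ on $K$; applying Putinar's theorem to $g - \delta/2$ yields $g - \delta/2 = h_0 + \sum_{i=1}^m h_i \cdot f_i$ with each $h_i$ a sum of squares, and setting $y_0 = \delta/2 > 0$ recovers the asserted representation.

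The main obstacle is the converse direction, whose core---the representability of a strictly positive polynomial inside an Archimedean quadratic module---is a deep theorem with no short self-contained proof; the standard routes proceed through the functional-analytic solution of the moment problem or via Schm\"udgen's theorem together with the Archimedean property. I would therefore cite Putinar's representation as a black box and devote the original effort to the two adaptation steps above: verifying that the compactness and non-strictness hypotheses deliver the Archimedean condition, and carrying out the offset argument that extracts the positive constant $y_0$.
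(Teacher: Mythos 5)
The paper does not actually prove this theorem: it is quoted verbatim as a known result, with the remark that ``all theorem claims are adopted from~\cite{AsadiC0GM21}'' and a pointer to Putinar's original paper. So there is no in-paper argument to compare against, and your proposal should be judged as a reconstruction of why the stated form follows from the classical Positivstellensatz. On that score it is essentially correct. The soundness half is exactly the elementary evaluation argument: at any point of the satisfiability set each $h_i$ is nonnegative because it is a sum of squares, each $f_i$ is nonnegative (strict implies non-strict), and the constant $y_0>0$ forces $g>0$. For the converse, your two adaptation steps are the right ones: (i) the hypothesis that some $\{f_i\ge 0\}$ is compact, together with $f_i$ itself lying in the quadratic module $M$, is precisely one of the equivalent characterizations of the Archimedean property in Putinar's 1993 theorem, so you do not even need to exhibit a ball inequality by hand --- the equivalence (existence of $p\in M$ with $\{p\ge 0\}$ compact) $\Leftrightarrow$ ($M$ Archimedean) $\Leftrightarrow$ (every polynomial strictly positive on $K$ lies in $M$) can be cited wholesale; and (ii) the offset argument extracting $y_0=\delta/2$ from the positive minimum of $g$ on the compact set $K$ (compact because it is a closed subset of the compact $\{f_i\ge 0\}$, and nonempty by the satisfiability hypothesis) is the standard way to upgrade membership in $M$ to the representation with an explicit positive constant. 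Your phrasing of step (i) is slightly more laborious than necessary, but nothing in the argument would fail.
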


\begin{theorem}[Unsatisfiability theorem~\cite{AsadiC0GM21}]
\label{thm:unsat}
    Consider a set $V = \{x_1, \dots, x_n \}$ of real-valued variables and the following system of $m$ polynomial inequalities over $V$
     \begin{equation*}
        \Phi := \{
            f_1(x_1, \dots, x_n) \bowtie_1  0,
            \hfil \dots,
            f_m(x_1, \dots, x_n) \bowtie_m 0
            \}.
    \end{equation*}
    where each $\bowtie_i \in \{>, \ge\}$. Then $\Phi$ is unsatisfiable if and only if at least one of the following two conditions holds:
    \begin{itemize}
        \item[U1)] There exist a positive real number $y_0$ and sum-of-square polynomials $h_0, \dots, h_m$ such that
        $$-1 = y_0 + h_0 + \sum_{i=1}^{m} h_i \cdot f_i$$
        \item[U2)] There exist $d \in \mathbb{N}_0$ and polynomials $h_1', \dots, h_m'$ over $V \cup \{w_1, \dots, w_m\}$, such that for some $j$ in which $\bowtie_j \in \{>\}$, $w_j^{2\cdot d} = \sum_{i=1}^{m} h_i' \cdot (f_i - w_i^2)$.
    \end{itemize}
\end{theorem}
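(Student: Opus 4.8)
I would prove the two directions separately, treating the claim that (U1) or (U2) implies unsatisfiability of $\Phi$ as a direct verification and reserving the real work for the converse. Suppose $x^\ast$ were a satisfying point, so that $f_i(x^\ast)\ge 0$ for every $i$ and $f_i(x^\ast)>0$ whenever $\bowtie_i\,\in\{>\}$. If (U1) holds, then evaluating $-1=y_0+h_0+\sum_i h_i f_i$ at $x^\ast$ and using that each $h_i$ is a sum of squares (hence nonnegative) together with $f_i(x^\ast)\ge 0$ yields $-1\ge y_0>0$, a contradiction; this is precisely the nonnegativity argument behind Putinar's theorem (Theorem~\ref{thm:putinar}). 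If (U2) holds, I set $w_i:=\sqrt{f_i(x^\ast)}$, which is real since $f_i(x^\ast)\ge 0$; then every factor $f_i-w_i^2$ vanishes at $(x^\ast,w)$, so the right-hand side of the identity is $0$, while the left-hand side equals $f_j(x^\ast)^d$, which is $1$ when $d=0$ and strictly positive when $d\ge 1$ because $\bowtie_j\,\in\{>\}$ forces $f_j(x^\ast)>0$. Either way this contradicts satisfiability, so both certificates are sound.

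For the converse I would pass to the standard real-algebraic encoding of nonnegativity, $f_i(x)\ge 0\iff \exists\,w_i\in\mathbb{R}.\,f_i(x)=w_i^2$, and work in $\mathbb{R}[x,w]$ with the ideal $I:=(f_1-w_1^2,\dots,f_m-w_m^2)$. The real variety $V_{\mathbb{R}}(I)$ projects exactly onto the closed relaxation $\{x:f_i(x)\ge 0\ \forall i\}$, and satisfying points of $\Phi$ correspond to real points of $V_{\mathbb{R}}(I)$ at which $w_i\neq 0$ for every strict index $i$. I would then split into two cases. If the closed relaxation is empty, then $V_{\mathbb{R}}(I)=\emptyset$, and Stengle's real Positivstellensatz produces a representation of $-1$ as a member of the preordering generated by the $f_i$; the identity $f_i\equiv w_i^2$ turns the products of generators occurring in such a representation into squares, which is the mechanism that lets it be rearranged into the sum-of-squares-weighted linear form of (U1). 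If the closed relaxation is nonempty, infeasibility of $\Phi$ must be caused by the strict conditions $\{f_i>0:\bowtie_i\,\in\{>\}\}$ being jointly unsatisfiable on it, i.e.\ the product $\prod_{\bowtie_i\in\{>\}} w_i$ vanishes on all of $V_{\mathbb{R}}(I)$; applying the Real Nullstellensatz to this product then yields an identity that I would distill into the ideal-membership form $w_j^{2d}=\sum_i h_i'\,(f_i-w_i^2)$ of (U2).

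The main obstacle is the completeness direction, and more precisely the step of converting the abstract membership supplied by the Positivstellensatz and the Real Nullstellensatz into the exact syntactic normal forms of (U1) and (U2). Two points require care. First, Stengle's theorem a priori only places $-1$ in the \emph{preordering}, which contains products $\prod_{i\in\alpha}f_i$ of several generators; reducing this to the degree-one-in-$f_i$ shape of (U1) is exactly what the slack substitution $w_i^2=f_i$ accomplishes, since it rewrites each such product as a square and hence as an admissible sum-of-squares weight, and one must check (e.g.\ by averaging over the sign flips $w_i\mapsto -w_i$) that the slack variables can then be eliminated. Second, the Real Nullstellensatz applied to $\prod_{\bowtie_i\in\{>\}}w_i$ naturally outputs an identity carrying a residual sum-of-squares term and a product of slack variables, so isolating a single coordinate $w_j$ with no residual term, as (U2) demands, is the genuinely delicate step and is precisely where the special structure of the generators $f_i-w_i^2$ and the distinction between strict and non-strict inequalities must be exploited. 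Managing this bookkeeping — which is what forces the two-case structure of the statement — is where the bulk of the effort lies.
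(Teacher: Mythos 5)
First, a point of comparison: the paper does not prove Theorem~\ref{thm:unsat} at all --- the appendix explicitly states that all theorem claims there are adopted from~\cite{AsadiC0GM21} --- so there is no in-paper argument to measure yours against. Your soundness direction is correct and essentially complete: the evaluation argument for (U1) and the substitution $w_i := \sqrt{f_i(x^\ast)}$ for (U2) are exactly the right moves.

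The gap is in the converse, and it is worse than the bookkeeping you flag. The step you call ``genuinely delicate'' --- removing the residual sum-of-squares term that the Real Nullstellensatz attaches to $w_j^{2d}$ --- cannot be carried out for the statement as literally written. Take $\Phi = \{-x^2 > 0\}$, which is unsatisfiable. Condition (U1) fails: evaluating $-1 = y_0 + h_0 + h_1\cdot(-x^2)$ at $x=0$ gives $-1 = y_0 + h_0(0) \ge y_0 > 0$, a contradiction. Condition (U2) fails too: for $d \ge 1$ it requires the irreducible polynomial $x^2+w_1^2$ to divide $w_1^{2d}$ in $\mathbb{R}[x,w_1]$, which unique factorization forbids, and for $d=0$ it places $1$ in a proper ideal. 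Yet the Real Nullstellensatz certificate \emph{with} the residual square, $w_1^2 + x^2 = (-1)\cdot(-x^2-w_1^2)$, does exist --- which is exactly the term your plan must discard and cannot. A parallel issue affects your first case: Stengle places $-1$ in the \emph{preordering}, which in general strictly contains the quadratic module appearing in (U1), and the sign-averaging elimination of the slack variables does not bridge that gap without an Archimedean or compactness hypothesis (note that the completeness clauses of Theorems~\ref{thm:handelman} and~\ref{thm:putinar} do carry such boundedness assumptions, while Theorem~\ref{thm:unsat} as stated carries none). So either the statement needs adjusting (retain the sum-of-squares remainder and/or the product of all strict slacks in (U2); use the preordering or add compactness in (U1)), or the completeness proof must take a genuinely different route; you should consult the exact formulation in~\cite{AsadiC0GM21} before investing further in the reduction you sketch.
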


Given a PQE in canonical form where both the LHS and the RHS contain a polynomial of degree at least $2$, Putinar's theorem specifies the sufficient conditions for the LHS to be satisfiable and to entail the RHS. 
On the other hand, the conditions for the LHS to be unsatisfiable are given by conditions {\em (U1)} and {\em (U2)} in Theorem~\ref{thm:unsat}. Hence, \polyqent\ translates each such PQE into a system of constraints defined by Putinar's theorem and Theorem~\ref{thm:unsat}. 

The translation is sound. The translation is also relatively complete, subject to the assumptions stated in Theorem~\ref{thm:putinar}, i.e.~that the satisfiability set of at least one inequality on the LHS is topologically compact, that the inequalities on the LHS are non-strict and the inequality on the RHS is strict.


\paragraph{\polyqent\ parameters for Putinar's theorem.} \polyqent\ defines the following parameters when invoking Putinar's theorem. In the default configuration, the values of all these parameters are set to the maximal polynomial degree appearing in the system of PQEs, and the values of these parameters can be changed by modifying the config file:
\begin{itemize}
    \item Maximum degree of $h_i$'s in Theorem~\ref{thm:putinar}. In the config file, \texttt{\em degree\_of\_sat} represents this parameter.
    \item Maximum degree of $h_i$'s in Theorem~\ref{thm:unsat} (condition {\em (U1)}). In the config file, \texttt{\em degree\_of\_nonstrict\_unsat} represents this parameter.
    \item Maximum degree of $h_i$'s in Theorem~\ref{thm:unsat} (condition {\em (U2)}). In the config
    file, \texttt{\em degree\_of\_strict\_unsat} represents 
    this parameter.
    \item Value of $d$ in Theorem~\ref{thm:unsat} (condition {\em (U2)}). In the config file, \texttt{\em max\_d\_of\_strict} represents this parameter.
\end{itemize}


\begin{remark}[Assume-SAT]
\label{remark:assumesat}
    Heuristic {\em Assume-SAT} removes the conditions that consider the case when the LHS of a PQE is unsatisfiable in all the previous theorems.
    If Farkas' lemma or Handelman's theorem is applied, \polyqent\ omits the conditions  {\em (F2)} and {\em (F3)} in Theorem~\ref{thm:Farkas'}. If Putinar's theorem is applied, \polyqent\ omits the conditions {\em (U1)} and {\em (U2)} in Theorem~\ref{thm:unsat}.
\end{remark}
    

\end{document}